\newtheorem*{lem*}{Lemma}
\newtheorem*{thm*}{Theorem}
\newtheorem{thm}{Theorem}[section]
\newtheorem{prop}[thm]{Proposition}
\newtheorem{ex}[thm]{Example}
\newtheorem{dfn}[thm]{Definition}
\newcommand{\ket}[1]{|#1\rangle}
\newcommand{\eq}[1]{\hyperref[eq:#1]{Eq.~(\ref*{eq:#1})}}
\renewcommand{\sec}[1]{\hyperref[sec:#1]{Section~\ref*{sec:#1}}}
\newcommand{\secsm}[1]{\hyperref[sec:#1]{Sec.~\ref*{sec:#1}}}
\newcommand{\app}[1]{\hyperref[app:#1]{Appendix~\ref*{app:#1}}}
\newcommand{\theo}[1]{\hyperref[thm:#1]{Theorem~\ref*{thm:#1}}}
\newcommand{\algo}[1]{\hyperref[alg:#1]{Algorithm~\ref*{alg:#1}}}
\newcommand{\lemm}[1]{\hyperref[lem:#1]{Lemma~\ref*{lem:#1}}}
\newcommand{\defn}[1]{\hyperref[defn:#1]{Definition~\ref*{defn:#1}}}
\newcommand{\corr}[1]{\hyperref[cor:#1]{Corollary~\ref*{cor:#1}}}
\newcommand{\condition}[1]{\hyperref[cond:#1]{Condition~\ref*{cond:#1}}}
\newcommand{\fig}[1]{\hyperref[fig:#1]{Fig.~\ref*{fig:#1}}}
\newcommand{\tab}[1]{\hyperref[tab:#1]{Table~\ref*{tab:#1}}}
\newcommand{\tabsm}[1]{\hyperref[tab:#1]{Tab.~\ref*{tab:#1}}}
\newcommand{\propos}[1]{\hyperref[prop:#1]{Proposition~\ref*{prop:#1}}}
\newcommand{\propsm}[1]{\hyperref[prop:#1]{Prop.~\ref*{prop:#1}}}
\newcommand{\rema}[1]{\hyperref[rem:#1]{Remark~\ref*{rem:#1}}}
\newcommand{\hpblsp}{\mathbb{H}^2}
\newcommand{\euclidsp}{\mathbb{R}^2}
\newcommand{\manifold}{\mathcal{M}}
\newcommand{\unicover}{\mathcal{U}}
\newcommand{\im}{{\rm{im}}}
\newcommand{\rs}{{\rm{rs}}}
\newcommand{\transpose}{\mathsf{T}}
\newcommand{\ztwo}{\mathbb{F}_2}
\newcommand{\group}{G}
\newcommand{\nbhood}{\mathcal{A}}
\newcommand{\ancilla}{\mathcal{A}}
\newcommand{\data}{\mathcal{D}}
\newcommand{\vertexset}{\mathcal{V}}
\newcommand{\edgeset}{\mathcal{E}}
\newcommand{\faceset}{\mathcal{F}}
\newcommand{\nocontentsline}[3]{}
\newcommand{\tocless}[2]{\bgroup\let\addcontentsline=\nocontentsline#1{#2}\egroup}
\DeclareMathAlphabet\mathbfcal{OMS}{cmsy}{b}{n} 
\definecolor{darkgreen}{rgb}{0,0.5,0}
\definecolor{red}{rgb}{1,0,0}
\preprint{APS/123-QED}
\begin{document}

\title{Homomorphic Logical Measurements}

 \author{Shilin Huang}
\email{shilin.huang@duke.edu}
 \affiliation{Duke Quantum Center, Duke University, Durham, NC 27701, USA}
 \affiliation{Department of Electrical and Computer Engineering, Duke University, Durham, NC 27708, USA}

 \author{Tomas Jochym-O'Connor}
 \affiliation{IBM Quantum, IBM Almaden Research Center,  San Jose,  CA 95120,  USA}
 \author{Theodore J. Yoder}
 \affiliation{IBM Quantum, IBM T.J. Watson Research Center, Yorktown Heights, NY 10598, USA}

\begin{abstract}
Shor and Steane ancilla are two well-known methods for fault-tolerant logical measurements, which are successful on small codes and their concatenations.
On large quantum low-density-parity-check (LDPC) codes, however, Shor and Steane measurements have impractical time and space overhead respectively.
In this work, we widen the choice of ancilla codes by unifying Shor and Steane measurements into a single framework, called \textit{homomorphic measurements}.
For any Calderbank-Shor-Steane (CSS) code with the appropriate ancilla code,
one can avoid
repetitive measurements or complicated ancilla state preparation procedures such as distillation, which overcomes the difficulties of both Shor and Steane methods.
As an example, we utilize the theory of covering spaces to construct homomorphic measurement protocols for arbitrary $X$- or $Z$-type logical Pauli operators on surface codes in general, including the toric code and hyperbolic surface codes.
Conventional surface code decoders, such as minimum-weight perfect matching, can be directly applied to our constructions.

\end{abstract}

\maketitle

\section{Introduction}

Quantum error correction \cite{Shor:1995b, kitaev2003fault, terhal2015quantum} is a necessary component for large-scale digital quantum computation \cite{Shor:1995,lloyd1996universal}.
Fault-tolerant quantum computation (FTQC) via surface codes \cite{kitaev2003fault,
bravyi1998quantum,Dennis:2002,raussendorf2007fault} has been intensively optimized \cite{Horsman_2012,litinski2019game,Fowler:2012} due to its several experimentally friendly features, including nearest-neighbor 2D connectivity \cite{Fowler:2012}, availability of efficient decoding algorithms \cite{Dennis:2002,edmonds2009paths, Wang:2009, fowler2012towards, delfosse2021almost,huang2020fault,higgott2022fragile} and high fault-tolerance threshold values \cite{wang2011surface,higgott2022fragile}. However, the resource requirement for surface code FTQC is still daunting as the encoding rate vanishes in the asymptotic limit~\cite{BravyiPoulinTerhal}.

In principle, FTQC via quantum low-density-parity-check (LDPC) codes of non-vanishing encoding rate can achieve constant space overhead and therefore be more resource efficient \cite{Gottesman:2014, Fawzi2018constant}. 
While long-range interactions are unavoidable for these codes \cite{baspin2022quantifying, baspin2022connectivity,
tremblay2022constant, delfosse2021bounds}, 
as demonstrated by recent experiments \cite{stephenson2020high,zhang2022scalable, bluvstein2022quantum}, connectivity is not a fundamental hardware constraint.
Recent breakthroughs on quantum LDPC codes are encouraging:
Theoretically, a series of results \cite{hastings2021fiber,breuckmann2021balanced,panteleev2021quantum} led to the exciting discovery of ``good'' quantum LDPC codes, in the sense that non-vanishing encoding rate and linear distance are simultaneously achieved \cite{panteleev2002asymptotically,leverrier2022quantum,dinur2022good}; Practically, 
numerical studies have shown fault-tolerance threshold values of quantum LDPC codes above $10^{-3}$ \cite{tremblay2022constant}, while the decoders and syndrome extraction circuits have not been fully optimized yet.

How fault-tolerant logical operations are performed on quantum LDPC codes is still far from well understood. 
As a proof-of-concept step, techniques from surface code FTQC, such as code deformation \cite{Bombin_2009} and lattice surgery \cite{Horsman_2012}, are firstly migrated to specific instances such as hyperbolic surface codes \cite{Breuckmann2017hyperbolic, Lavasani2018} and hypergraph product codes \cite{tillich2014quantum, krishna2021hypergraph,Krishna2020wormhole},
then further generalized to any Calderbank-Shor-Steane (CSS, \cite{calderbank1997orthgonal}) quantum LDPC codes \cite{cohen2022lowoverhead}.
Importantly, \textit{individual addressing} of logical qubits is enabled by the capability of performing arbitrary logical Pauli measurements \cite{cohen2022lowoverhead}.
However, the code deformation processes oftentimes do not preserve the decoding properties of given quantum LDPC codes at the quantum memory stage, such as the ability of applying minimum-weight-perfect-matching (MWPM) decoders \cite{Dennis:2002,edmonds2009paths,Fowler:2012} or single-shot error correction \cite{bombin2015single, Campbell2019,Fawzi2018constant}.   

If connectivity is not a limitation for quantum LDPC codes, logical operations based on transversal two-qubit gates can also be considered as a candidate approach. 
The first example was developed by Shor~\cite{shor1996fault,divincenzo1996fault}, in which any logical Pauli operator is measured by an ancilla cat state of the same weight.
While Shor measurement supports individual addressing of logical qubits, the measurement circuit has to be repeated to boost the accuracy. As the code distance increases, the number of repetitions grows exponentially  \cite{Lavasani2018}, which is unfavorable for large quantum LDPC codes.
To avoid repetitive readouts, 
Steane~\cite{steane1997active} and Knill~\cite{Knill2005nature} utilized ancilla blocks encoded by the same code as data to perform single-shot logical measurements. 
However, to address a particular logical qubit, the ancilla block needs to be prepared in specific logical state, whose preparation requires protocols of large space overhead such as distillation \cite{Zheng_2018,Zheng_2020} and code concatenation \cite{Knill2005nature, Gottesman:2014}. 

In this paper, inspired by Refs.~\cite{Huang2021PRL,huang2021constructions}, 
we unify Shor and Steane measurements for CSS codes into a single framework, called \textit{homomorphic measurements}. In our framework, the data code block can interact with an ancilla block encoded in another CSS code. By choosing an appropriate ancilla code that encodes only one logical qubit and has the same distance as the data code, we are able to perform single-shot measurements of arbitrary single- and multi-qubit CSS logical Paulis without the need of state distillation, which circumvent the difficulties of both Shor and Steane measurements.
To show that constructing such ancilla codes is possible,
we provide a procedure for constructing these ancilla codes for surface codes on two-dimensional closed manifolds including Kitaev's toric code~\cite{kitaev2003fault} and hyperbolic surface codes~\cite{breuckmann2016constructions,Breuckmann2017hyperbolic,higgott2021subsystem}. 
Our construction utilizes the concept of \textit{covering spaces} in topology \cite{Hatcher}. In contrast to logical measurement methods based on lattice surgery \cite{Breuckmann2017hyperbolic,cohen2022lowoverhead}, our approach supports conventional decoding algorithms including MWPM~\cite{Dennis:2002,edmonds2009paths, Fowler:2012} and union-find~\cite{delfosse2021almost}
in a straightforward way. 

The paper is organized as follows. \sec{background} provides essential background and present our notation.
\sec{Shor_and_Steane} reviews Shor and Steane measurements and discusses their advantages and drawbacks.  \sec{homo_meas} presents our homomorphic measurement framework.
\sec{homsurface} provides the procedure for constructing homomorphic measurement circuits for surface codes,
and compares it to the existing approaches. 
\sec{conclusion} concludes and discusses future research directions.

\section{Notations and Background}\label{sec:background}

\subsection{Binary vector spaces}

Given an $r\times n$ binary matrix $H: \ztwo^n \rightarrow \ztwo^r$, the transpose, kernel (null space), image (column space) and row space of $H$ are denoted by $H^\transpose$, $\ker(H)$, $\im(H)$ and $\rs(H)$, respectively.
The binary vector space spanned by some finite set $S$ is denoted by $\ztwo[S]$.
Note that each vector $v = \sum_{e\in S} v_e e \in \ztwo[S]$ can be viewed as a subset of $S$, where each $e \in S$ is in the subset if and only if $v_e = 1$.
The transpose of a linear map $H: \ztwo[A] \rightarrow \ztwo[B]$ is defined to be the map $H^\transpose: \ztwo[B] \rightarrow \ztwo[A]$ such that the matrix $H^\transpose$ is the transpose of $H$ under the bases $A$ and $B$.

\subsection{CSS stabilizer codes}

An $[[n,k,d]]$ CSS code $(H_X,H_Z)$ is defined by two parity-check matrices $H_X: \ztwo^n \rightarrow \ztwo^{r_X}$ and $H_Z: \ztwo^n \rightarrow \ztwo^{r_Z}$ such that $H_X H_Z^\transpose = 0$. 
The quantum code has an $X$-type Pauli stabilizer group isomorphic to the space $\rs(H_X)$, and a $Z$-type stabilizer group isomorphic to $\rs(H_Z)$. The $X$-type and $Z$-type logical operators are represented by elements of $\ker(H_Z)$ and $\ker(H_X)$, respectively.
The number of encoded logical qubits 
\begin{eqnarray*}
k &=& \dim\left(\ker(H_X)/\rs(H_Z)\right)\\
&=& \dim\left(\ker(H_Z)/\rs(H_X)\right).
\end{eqnarray*}
The $X$-distance ($Z$-distance), denoted by $d_X$ ($d_Z$), is the minimum weight of $X$-type ($Z$-type) logical errors. We have 
\begin{eqnarray*}
d_X &:=& \min\left\{|c|: c \in \ker(H_Z)\setminus \rs(H_X)\right\},\\
d_Z &:=& \min\left\{|c|: c \in \ker(H_X)\setminus \rs(H_Z)\right\}.
\end{eqnarray*}
The code distance $d = \min\{d_X, d_Z\}$.

There is a well-established intepretation of CSS codes as \textit{chain complexes}~\cite{freedman2001projective,kitaev2003fault}. 
In this work, a chain complex is a sequence of binary vector spaces $\{C_i\}$ connected by some linear maps $\left\{\partial_i\right\}$:
\begin{eqnarray*}
\dots \rightarrow C_{i+1} \xrightarrow{\partial_{i+1}} C_i \xrightarrow{\partial_i} C_{i-1} \rightarrow \dots
\end{eqnarray*}
satisfying the condition $\partial_{i-1} \partial_i = 0$.
The maps $\left\{\partial_i\right\}$ are referred to as the \textit{boundary maps}.
Any CSS code $(H_X, H_Z)$ can be viewed as a length-$3$ chain complex
\begin{eqnarray*}
C_2 \xrightarrow{\partial_2}  C_1  \xrightarrow{\partial_1}  C_0 
\end{eqnarray*}
with $\partial_2 = H_Z^\transpose$ and $\partial_1 = H_X$.
The condition $\partial_1 \partial_2 = H_XH_Z^\transpose = 0$ is satisfied by definition.

\section{Fault-tolerant logical measurement via Shor and Steane methods}
\label{sec:Shor_and_Steane}

In this section, we provide a brief overview of Shor and Steane measurements. 
While both schemes can be generalized for logical Pauli measurements of non-CSS stabilizer codes \cite{Zheng_2018,Zheng_2020,yoder2018practical}, our discussion focuses on measuring $Z$-type logical Pauli operators of CSS codes.
We further discuss their limitations on large block codes.  

\subsection{Shor measurement}

Shor measurement uses an ancilla cat state to measure the logical Pauli operator directly, without utilizing the structure of the data code block.
For example, to measure a $Z$-type Pauli operator $\overline{P}$ of weight $w$, one can prepare a $w$-qubit ancilla cat state $\frac{1}{\sqrt{2}} \left(\ket{+^{\otimes w}} + \ket{-^{\otimes w}}\right)$, apply a transversal CNOT gate between the support of $\overline{P}$ (as control) and the cat state (as target), and finally measure all ancilla qubits in $Z$-basis. 
The outcome of measurement $\overline{P}$ is obtained via taking the parity of the individual measurement outcome.
Assuming the cat states are prepared fault-tolerantly in the sense that no correlated $Z$-errors exist,
the transversal CNOT gates do not create correlated errors on the data block.
The construction can be easily generalized to measure arbitrary Pauli operators.

The cat state measurement is not yet fault-tolerant as 
any individual measurement error can flip the logical measurement outcome. 
One might consider repeating the cat state measurement to gain enough confidence of the logical outcome by a majority vote.
However, life is not that simple: any single Pauli error on the data block that anti-commutes with $\overline{P}$ will flip all of the repeated logical measurement outcomes.
To resolve this issue, we need to apply fault-tolerant error correction between two consecutive cat-state measurements.  
Another idea to achieve fault tolerance is to choose different representatives of $\overline{P}$ in different measurement rounds \cite{delfosse2020short}.

\subsection{Steane measurement}

Steane measurement utilizes the fact that transversal CNOT gates are available for CSS codes. Instead of cat states, one can use the appropriate codeword of the CSS code as the ancilla state.

We start with the case that the CSS code $(H_X,H_Z)$ only encodes one logical qubit. To measure the logical $Z$ operator of the data block, one can prepare an ancilla code block in logical state $\overline{\ket{0}}$, apply a transversal CNOT gate between data (control) and ancilla (target), and measure all qubits in $Z$-basis.
To read out the logical outcome, one applies error correction of the classical code defined by $H_Z$, then takes the parity of the individual outcomes inside the support of any representative of the logical $Z$ operator.
Unlike Shor measurement, Steane measurement does not require any repetitions. This is because the ancilla block can correct up to $\lfloor \frac{d-1}{2} \rfloor$ measurement errors. The $X$-errors on data qubits before the transversal CNOT are also not a concern, as they are equivalent to measurement errors.

More generally, if the data block encodes $k>1$ logical qubits, to measure a general $Z$-type logical operator $\overline{P}$, the ancilla state should be stabilized by $\overline{P}$ and any other $X$-type logical operators that commute with $\overline{P}$. As an example, suppose $k = 3$ and we want to measure the logical $\overline{Z_1Z_2Z_3}$ of the data block. The ancilla code state should be stabilized by $\overline{Z_1Z_2Z_3}$, $\overline{X_1X_2}$ and $\overline{X_2X_3}$, which is the logical cat state $$\frac{\overline{\ket{+_1+_2+_3}}+\overline{\ket{-_1-_2-_3}}}{\sqrt{2}}.$$

\subsection{Limitations of Shor and Steane measurements}

Shor measurement, while straightforward, can have an impractical time overhead for large quantum codes. 
The reason is that to obtain meaningful logical outcome,
even without counting the errors in ancilla state preparation,
the individual measurement error rate $p$ needs to satisfy $pw \ll 1$.
However, when $p$ is a constant for a quantum code family, as the distance $d$ increases, we will eventually have $pw \ge pd \gg 1$. As pointed out in Ref. \cite{Lavasani2018}, the probability of measuring the logical outcome correctly is
$\frac{1}{2} + \frac{1}{2} (1-2p)^w$,
which is exponentially close to $1/2$ as $d$ increases. 
As a result, a fault-tolerant logical readout will require an exponential number of repetitions. 

While Steane measurement is single-shot, the complication of Steane ancilla preparation is a concern. 
A common strategy is to distill good ancilla states without correlated errors from noisy ones with non-fault-tolerant preparations \cite{reichardt2004improved, Zheng_2018,Zheng_2020}.
However, if an ancilla block is prepared by a non-fault-tolerant unitary encoding circuit, there can be typically $O(np)$ number of correlated errors of arbitrary weights, where $p$ is the physical error rate. 
As the code size $n$ increases,
the rejection rate of postselection will be exponentially close to $1$ in order to distill usable ancilla blocks.
To overcome this issue, one has to apply code concatenation~\cite{knill2005scalable,Gottesman:2014} to reduce the physical error rate $p$ to at least $p/n$ before state distillation, which result in non-negligible space overhead.

\section{Homomorphic logical measurements}\label{sec:homo_meas}

For Steane measurement, there is a special case that postselection can be fully avoided:
For quantum LDPC codes of $k=1$, 
the Steane ancilla state $\overline{\ket{0}}$ can
be prepared via \textit{Shor error correction}~\cite{shor1996fault}.
The idea is to first initialize all the ancilla qubits in $\ket{0}$, then repeat measuring the stabilizer elements for some number of rounds and apply error correction.
Each stabilizer element can be measured by a cat-state gadget or other simplified constructions~\cite{divincenzo2007effective,Chao2018,chamberland2018flag,Yoder:2017b,Li:2017,Chamberland2020topo,huang2019fault,tansuwannont2020flag}. 
Note that the logical $Z$ operator and $Z$-type stabilizer generators are already in the stabilizer group before the repetitive stabilizer measurements. 
For this reason, measuring the high-weight logical $Z$ operator is not required. 
For many quantum LDPC code families, with appropriate choices of decoders and number of repetitive measurement rounds, logical state preparation via Shor error correction has a fault tolerance threshold and is robust against \textit{typical errors} \cite{Dennis:2002,Fawzi2018constant}.

For codes encoding $k>1$ logical qubits,
it is straightforward to generalize the above protocol to prepare $\overline{\ket{0^{\otimes k}}}$ and $\overline{\ket{+^{\otimes k}}}$. However, for other logical CSS ancilla states, both $Z$- and $X$-type high-weight stabilizer elements exist. As a result, one cannot avoid high-weight stabilizer measurements 
by initializing all ancilla qubits in $\ket{0}$ or $\ket{+}$ before Shor error correction. 
A natural thought on circumventing this difficulty is to choose another $[[m,1,d']]$ quantum LDPC code as ancilla. In fact, Shor measurement is already an example, where the ancilla code is the quantum repetition code with weight-$2$ $X$-type stabilizers. It is just unfortunate that quantum repetition code has distance $d' = 1$, which results in the exponential time overhead  of Shor measurement. To avoid repetitive measurements, it is desirable to have $d' = d$.

From the above observations, we are ready to ask the following question: \textit{can we construct an $[[m,1,d]]$ ancilla quantum LDPC code for single-shot fault-tolerant $Z$-type logical measurement?} To answer this, we need to understand what kinds of ancilla gadgets, including the choice of ancilla codes and how the data interacts with ancilla, are allowed for a given data code.
As we will see later, these ancilla gadgets can be described as a homomorphism between two chain complexes representing the ancilla and data CSS codes. 
For this reason, we refer these ancilla gadgets to as \textit{homomorphic gadgets}. 

\subsection{Homomorphic gadgets: definition}

Suppose we have a data block encoded in an $[[n,k,d]]$ CSS code $(H_X,H_Z)$ and we apply some CNOT gates to entangle the data block with an $[[m,k',d']]$ ancilla CSS code block $(H_X',H_Z')$.
The CNOT gates are encoded as a matrix $\Gamma: \ztwo^{m} \rightarrow \ztwo^n$ referred to as the \textit{gate matrix}, where  $\Gamma_{ij} = 1$ if and only if there is a CNOT gate between $i$-th data qubit as control and $j$-th ancilla qubit as target.
To simplify our discussion, the homomorphic gadgets we developed here are for $Z$-type logical measurements. For $X$-type logical measurements, the ancilla qubits serve as controls while the data qubits are targets.

The data-ancilla interaction, represented by $\Gamma$, should be logical in the sense that it preserves the total stabilizer group of two blocks. 
Before the interaction, the total 
$Z$- and $X$-stabilizer groups are isomorphic to the vector spaces
$$T_Z := \rs\left(\begin{array}{cc}H_Z &
0\\
0 & H_Z'\end{array}\right) \subseteq \ztwo^{n+m}$$ 
and 
$$T_X := \rs\left(\begin{array}{cc}H_X &
0\\
0 & H_X'\end{array}\right) \subseteq \ztwo^{n+m}$$ 
respectively. 
After the interaction, they become
$$T_Z' := \rs\left(\begin{array}{cc}H_Z & 
0\\
H_Z' \Gamma^\transpose &
H_Z' \end{array}\right)$$ 
and
$$T_X' := \rs\left(\begin{array}{cc}H_X & 
H_X\Gamma\\
0& H_X'
\end{array}\right)$$ 
respectively. 
To guarantee $T_Z' = T_Z$ and $T_X' = T_X$, we must have
\begin{eqnarray}
\rs(H_Z' \Gamma^\transpose) \subseteq \rs(H_Z)\label{eqn:homgadget1}
\end{eqnarray}
and
\begin{eqnarray}
\rs(H_X \Gamma) \subseteq \rs(H_X'). \label{eqn:homgadget2}
\end{eqnarray}

From the above discussion, we give our formal definition of homomorphic gadgets:
\begin{dfn}[Homomorphic gadget]
An $[[m,k',d']]$ homomorphic gadget $(H_X',H_Z',\Gamma)$ for an $[[n,k,d]]$ CSS code $(H_X,H_Z)$ has 
the following data:
\begin{itemize}
    \item[(i)] An ancilla $[[m,k',d']]$ CSS code $(H_X',H_Z')$,
    \item[(ii)] The gate matrix $\Gamma: \ztwo^m \rightarrow \ztwo^n$,
\end{itemize}
where $H_X'$, $H_Z'$ and $\Gamma$ satisfy 
$\rs(H_Z \Gamma^\transpose) \subseteq \rs(H_Z)$ and $\rs(H_X \Gamma) \subseteq \rs(H_X')$.
\end{dfn}

\subsection{Logical measurements}

To measure $Z$-type logical operators of the $[[n,k,d]]$ data code block $(H_X,H_Z)$ using an $[[m,k',d']]$ homomorphic gadget $(H_X',H_Z',\Gamma)$, 
one first prepares the ancilla code $(H_X',H_Z')$ in the logical state $\overline{\ket{0^{\otimes k'}}}$, applies the data-ancilla interaction encoded by $\Gamma$, and finally measures all ancilla qubits in the $Z$-basis. 

Before the data-ancilla interaction, the $Z$-stabilizer group of the ancilla block is isomorphic to the vector space $\ker H_X'$ as any $Z$-type logical operators of the ancilla stabilize $\overline{\ket{0^{\otimes k'}}}$. 
After the interaction,
each $Z$-stabilizer element $v \in \ker H_X'$ is transformed as
a $Z$-type operator 
$\Gamma v \oplus v \in \ztwo^{n+m}$
supported on both data and ancilla blocks.
If $v \in \rs(H_Z')$ is in the $Z$-stabilizer group of the ancilla code $(H_X',H_Z')$, by our definition of homomorphic gadgets,
$v$ will still be a $Z$-stabilizer of the ancilla block after the interaction.
After the transversal $Z$-measurements, the outcome of $v$ can be used as a parity check to correct $X$ errors.
If $v \notin \rs(H_Z')$, however, 
the $Z$-type operator $\Gamma v$ of the data block is being measured, whose outcome is the measurement outcome of $v$.
The group of $Z$-type logical operators measured by the gadget is therefore isomorphic to $\Gamma(\ker H_X')$. 
Typically, we choose $k'=1$ to  measure just one $Z$-type logical operator $\Gamma v$, where $v \in \ker H_X'$ is the unique non-trivial element. 
If necessary, we can choose $k'>1$ to perform parallel logical measurements.

\subsection{Homomorphic gadgets as chain homomorphisms}

We now address the naming of homomorphic gadgets.
As discussed in \sec{background}, we can describe the data and ancilla CSS code blocks as two chain complexes
\begin{eqnarray*}
C_2 \xrightarrow{\partial_2}  C_1  \xrightarrow{\partial_1}  C_0 
\end{eqnarray*}
and 
\begin{eqnarray*}
C_2' \xrightarrow{\partial_2'}  C_1'  \xrightarrow{\partial_1'}  C_0'
\end{eqnarray*}
respectively. The data-ancilla interaction is a linear map $\gamma_1: C_1' \rightarrow C_1$. 
In the previous notation, the gadget is the tuple $(\partial_1', \partial_2'^\transpose, \gamma_1)$.
The conditions (\ref{eqn:homgadget1}) and (\ref{eqn:homgadget2}) are rewritten as
\begin{eqnarray}
\im(\gamma_1 \partial_2') \subseteq \im(\partial_2) \label{eqn:homgadget1chain}
\end{eqnarray}
and 
\begin{eqnarray}
\rs(\partial_1\gamma_1) \subseteq \rs (\partial_1') \label{eqn:homgadget2chain}
\end{eqnarray}
respectively.
Eq.(\ref{eqn:homgadget1chain}) implies that there exists a linear map $\gamma_2: C_2' \rightarrow C_2$ such that $\partial_2\gamma_2 = \gamma_1\partial_2'$. Similarly, Eq.(\ref{eqn:homgadget2chain}) implies that there exists a linear map $\gamma_0: C_0' \rightarrow C_0$ such that
 $\gamma_0 \partial_1' = \partial_1 \gamma_1$. 
Therefore we obtain a commutative diagram
\begin{eqnarray*}
\begin{tikzcd}
 C_2' \arrow[d,"\gamma_2"]\arrow[r, "\partial_2'"] & C_1' \arrow[d,"\gamma_1"]\arrow[r, "\partial_1'"] & C_0' \arrow[d,"\gamma_0"] \\
 C_2 \arrow[r,"\partial_2"]& C_1 \arrow[r,"\partial_1"] & C_0 
\end{tikzcd}
\end{eqnarray*}
The series of mappings $\{\gamma_i: C_i' \rightarrow C_i\}$ is called a \textit{chain homomorphism} from $\{C_i'\}$ to $\{C_i\}$ \cite{Hatcher}.

One might wonder the usefulness of the chain homomorphism formulation, as the maps $\gamma_0$ and $\gamma_2$ are not used for specifying a homomorphic gadget. 
However, as many quantum LDPC codes are constructed from homology of manifolds \cite{kitaev2003fault,breuckmann2016constructions} or more general homological algebra \cite{tillich2014quantum,breuckmann2021balanced,hastings2021fiber,panteleev2002asymptotically,panteleev2021quantum}, such formulation will be convenient for finding explicit homomorphic gadget constructions.

\section{Homomorphic measurements of surface codes}\label{sec:homsurface}

We have not yet shown how to construct a homomorphic gadget with $[[m,1,d]]$ LDPC ancilla to measure a desired $Z$-type logical operator of an $[[n,k,d]]$ CSS LDPC code. 
Unlike Shor and Steane measurements, it is difficult to come up with a universal construction. 
For a proof-of-principle demonstration,
we show that for surface codes defined on a 2D closed manifold,
including toric codes \cite{kitaev2003fault}, hyperbolic surface codes, \cite{breuckmann2016constructions} and variants \cite{Breuckmann2017hyperbolic,higgott2021subsystem},
one can construct $[[m,1,d]]$ homomorphic gadgets for measuring arbitrary $Z$-type logical operators, where the ancilla codes are also surface codes.

\subsection{Surface codes}

In this section, a surface code is defined a by cellulation of a manifold $\manifold$ of dimension no more than $2$, denoted by $\manifold = (\vertexset, \edgeset, \faceset)$, where $\vertexset$, $\edgeset$ and $\faceset$ are the sets of $0$-cells (vertices), $1$-cells (edges) and $2$-cells (faces) of $\manifold$, respectively. 
A chain complex
\begin{eqnarray*}
\ztwo[\faceset] \xrightarrow{\partial_2} \ztwo[\edgeset] \xrightarrow{\partial_1} \ztwo[\vertexset]
\end{eqnarray*}
can be constructed,  where for each $f \in \faceset$, $\partial_2 f$ is the sum of edges that borders $f$, 
and for each $e \in \edgeset$, 
$\partial_1 e$ is the sum of two endpoints of $e$. As $\partial_2 f$ is a closed loop, we have $\partial_1 \partial_2 f = 0$ for all $f \in \faceset$. Therefore $\partial_1\partial_2 = 0$ and a CSS code $(\partial_1, \partial_2^\transpose)$ can be defined, where $\vertexset$, $\edgeset$ and $\faceset$ are served as the sets of $X$-checks, qubits and $Z$-checks, respectively.
Roughly speaking, a $Z$-type ($X$-type) logical operator corresponds to a non-contractible loop (cut) of the cellulation of the manifold $\manifold$. 

Note that the manifold $\manifold$ is allowed to have boundary. 
To simplify our discussion, for the majority of this section, we only allow the existence of  \textit{smooth boundaries} so that $Z$-type logical operators are always loops instead of a path connecting two different \textit{rough boundaries}. 
As an example, a surface code defined on a square does not encode any logical qubit, which differs from the planar surface code \cite{bravyi1998quantum} encoding one logical qubit. On a planar surface code, two opposite sides of the square are rough boundaries, while the other two sides are smooth. 
We will discuss surface codes with rough boundaries in \sec{roughboundary}.

We also note that when $\dim \manifold = 1$, there will be no $Z$-checks as $\faceset = \emptyset$. In particular, when $\manifold$ is a circle, the obtained code is a quantum repetition code with two-body $X$-type stabilizer elements.

A general idea for constructing homomorphic gadgets for a surface code $\data = (\vertexset, \edgeset, \faceset)$ is to find another surface code $\ancilla = (\vertexset', \edgeset', \faceset')$ 
and a continuous mapping $\gamma: \ancilla \rightarrow \data$ that preserves the cellulation structure. Then, $\gamma$ induces a chain homomorphism
\begin{eqnarray*}
\begin{tikzcd}
 \ztwo[\faceset'] \arrow[d,"\gamma_2"]\arrow[r, "\partial_2'"] & \ztwo[\edgeset'] \arrow[d,"\gamma_1"]\arrow[r, "\partial_1'"] & \ztwo[\vertexset'] \arrow[d,"\gamma_0"] \\
 \ztwo[\faceset] \arrow[r,"\partial_2"]& \ztwo[\edgeset] \arrow[r,"\partial_1"] & \ztwo[\vertexset]
\end{tikzcd}
\end{eqnarray*}
between the chain complexes associated with $\nbhood$ and $\data$. Here $\partial_i'$ are the boundary maps of $\nbhood$, and $\gamma_i$ are linear maps induced by $\gamma$ such that 
$\gamma_0(v') = \gamma(v')$, $\gamma_1(e') = \gamma(e')$ and $\gamma_2(f') = \gamma(f')$ for any vertices $v' \in \vertexset'$, edges $e' \in \edgeset'$ and faces $f' \in \faceset'$, respectively.
The resulting homomorphic gadget $(\partial_1', \partial_2'^\transpose, \gamma_1)$ has an ancilla CSS code $(\partial_1', \partial_2'^\transpose)$ and a gate matrix $\Gamma=\gamma_1$.

\subsection{Homomorphic gadgets from subspaces}
\begin{figure*}
    \centering
    \includegraphics[width=0.9\linewidth]{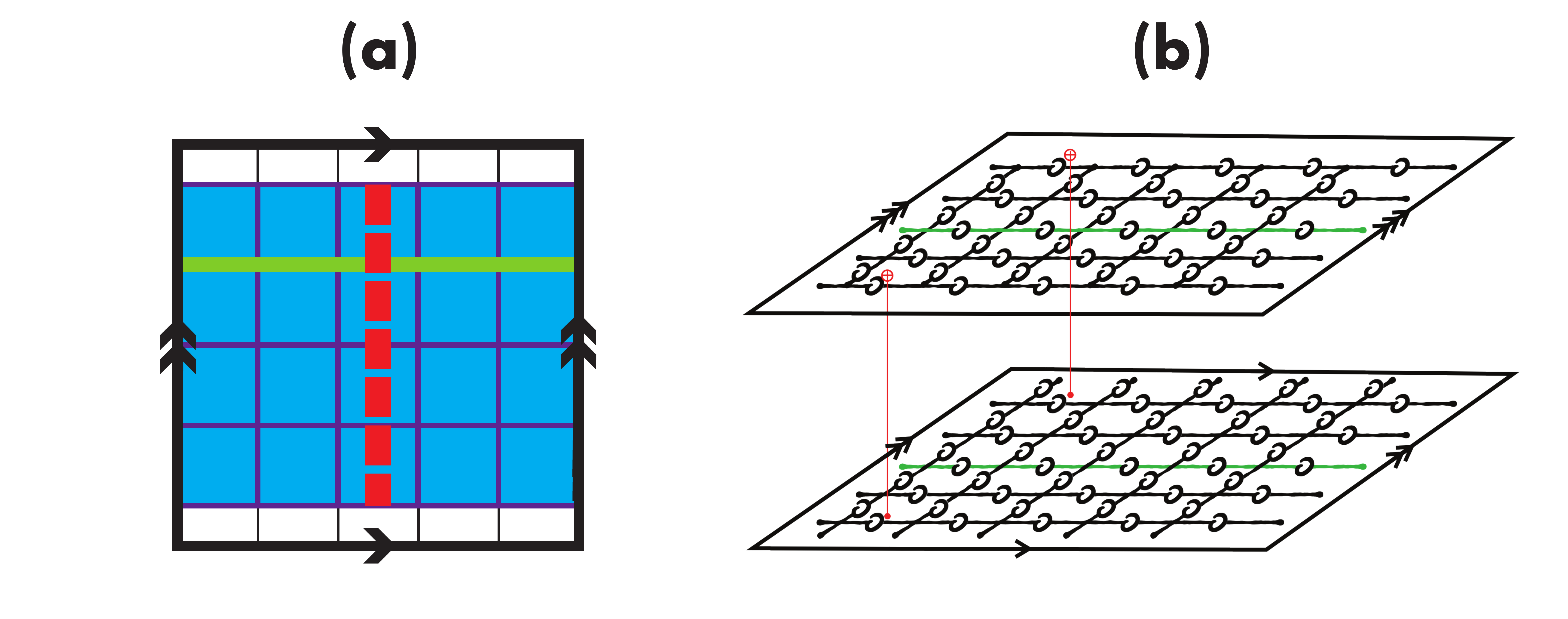}
    \caption{
    (a) Homomorphic gadget for measuring $\overline{Z}_1$ of the standard $[[2d^2,2,d]]$ toric code. The green line is the loop $\ell$ corresponding to $\overline{Z}_1$. The ancilla code $\ancilla$ is a ribbon encoding only one logical qubit. The faces ($Z$-stabilizer elements) of $\ancilla$ are colored in blue.  The logical $X$ operator of $\ancilla$ is formed by the edges on the cut in red. (b) Full implementation of the homomorphic measurement. The data and ancilla qubits are on the bottom and top layers respectively. The logical operator is measured by performing a CNOT gate between the data qubits (as control) and each corresponding ancilla qubit (as target). Note that there are $d$ data qubits (vertical edges) without corresponding ancilla qubits.
    }
    \label{fig:toric_code_Z1}
\end{figure*}

We now investigate how to construct a homomorphic gadget to measure some $Z$-type logical operator $\overline{P}$ of a surface code  $\data = (\vertexset,\edgeset,\faceset)$ of distance $d_\data$. For simplicity,
 we first assume that $\overline{P}$ correspond to a single loop $\ell$ on the graph $(\vertexset, \edgeset)$. The case that $\overline{P}$ corresponding to multiple disjoint loops will be addressed in \sec{joint_meas}.

A straightforward idea is to pick a 
subcomplex $\nbhood = (\vertexset',\edgeset',\faceset')$ of $\data$ such that $\ancilla \supseteq \ell$, 
then utilize the inclusion map $\gamma: \ancilla \hookrightarrow \data$ to induce the homomorphic gadget. As $\gamma$ is injective, the data-ancilla interaction is \textit{transversal} and immediately fault-tolerant.
An extreme case is that $\nbhood = \ell$ does not contain any faces, i.e., $\faceset' = \emptyset$.
If $\ell$ is not self-intersecting, the homomorphic gadget corresponds to the cat-state gadget for Shor measurement.

To avoid repetitive logical measurements, we need to enlarge the subcomplex $\ancilla$ and increase the code distance $d_\ancilla$ of $\ancilla$ to $d_\data$.
As shown in \fig{toric_code_Z1}, we find 
that when $\data$ is the standard toric code and $\ell$ is the horizontal loop on the torus corresponding to the logical $\overline{Z}_1$ operator, $\ancilla$ can be chosen as the product of a horizontal circle of length $d_\data$ and a vertical segment of length $d_\data-1$.

While we look for $\nbhood$ of distance $d_\nbhood = d_\data$, we also want to ensure that $\nbhood$ encodes only one logical qubit so that the ancilla state can still be easily prepared. 
It is clear that $\nbhood$ cannot be arbitrarily large: the largest subcomplex $\nbhood = \data$ encodes as many logical qubits as $\data$ does, which corresponds to Steane measurement of all $Z$-type logical operators rather than just the single logical operator we wanted.
In fact, there exist cases where it is even impossible to find any subcomplex $\nbhood \supseteq \ell$ of $\data$ that only encodes one logical qubit:
if $\ell$ is a self-intersecting loop which can be further decomposed into two non-contractible loops $\ell_1$ and $\ell_2$, 
even the smallest choice of subcomplex $\nbhood = \ell$ will encode two logical qubits. 

\subsection{Homomorphic gadgets from  covering spaces}\label{sec:homgadget_coveringspace}

To circumvent the difficulties discussed above, it would be nice if one can ``unfold'' the manifold $\data$ in a way that $\ell$ represents the unique $Z$-type logical operator up to stabilizers, while the other $Z$-type logical loops of $\data$ are unfolded. 
One can therefore choose a large enough subcomplex of the newly obtained space without worrying that other $Z$-type logical operators will be measured.
As we will see in this section, such intuition can be realized by the concept of \textit{covering spaces}. 

\subsubsection{Groups acting on spaces}

Before introducing covering spaces, we find it useful to discuss how the manifolds $\manifold$ for defining surface codes are constructed in general, using an abstract formulation known as \textit{groups acting on spaces}. 
Our treatment mainly follows Refs. \cite{breuckmann2017homological, Hatcher}.

Let $\unicover$ be a cellulation of a \textit{simply-connected} two-dimensional manifold, on which all loops are contractible.
Let $G$ be a group of homeomorphisms from $\unicover$ to itself such that for each element $g \in G$, the map $g: \unicover \rightarrow \unicover$ preserves the cellulation structure.
We can define a space $\unicover / \group$ by identifying each point $u \in \unicover$ with all images $g(u)$, where $g$ runs through all elements of $\group$.
The elements of $\unicover / \group$ are therefore the \textit{orbits} $Gu := \{g(u)|g \in G\}$ in $\unicover$, and $\unicover / \group$ is called the \textit{orbit space} of the action of $G$. 
Define the quotient map $p_\group: \unicover \rightarrow \unicover / \group$ by $p_\group(u) = \group u$ for all $u \in \unicover$.
As the elements of $G$ preserves the cellulation structure of $\unicover$, the space $\manifold = \unicover / \group$ inherits a cellulation structure from the map $p_G$: for each $k$-cell $e$ of $\unicover$, $p_G(e)$ is a $k$-cell of $\manifold$.

\begin{ex}[Toric code]
To construct the standard $[[2d^2,2,d]]$ toric code, one can take $\unicover = \euclidsp$ and $\group$ be the set of translations $t_{r,s}: \euclidsp \rightarrow \euclidsp$, where $(r,s) \in \mathbb{Z}^2$ and 
\begin{eqnarray}
t_{r,s}(x,y) := (x+dr, y+ds)
\end{eqnarray}
for each $(x,y) \in \euclidsp$.
It is evident that $\group \cong \mathbb{Z}^2$ and $\manifold = \unicover / \group$ is a torus.
If we cellularize $\unicover$ as a square lattice, where the faces are the squares $[i,i+1] \times [j,j+1]$ ($(i,j) \in \mathbb{Z}^2$), the torus $\manifold$ will also be cellularized by $d\times d$ squares.
\end{ex}

\begin{ex}[Hyperbolic surface codes]
For a hyperbolic surface code, the universal covering space $\unicover = \hpblsp$ is the hyperbolic plane with some regular $\{r,s\}$-tilling, where $\{r,s\}$ is the  \textit{Schl\"{a}fli symbol} indicating that each face is an $r$-gon and $s$ faces meet at each vertex. 
The group of homeomorphisms from $\hpblsp$ to itself that preserve the tilling structure is called the \textit{Coxeter group} and denoted by $G_{r,s}$.
The group $G$ is chosen to be a normal subgroup of $G_{r,s}$ such that $\group$ has no fixed points on $\unicover$ and $|G_{r,s} / G| < \infty$ so that $\unicover / G$ has finitely many faces. See Refs. \cite{breuckmann2016constructions,breuckmann2017homological} for details of the constructions. The parameters $[[n,k,d]]$ satisfy that $k=O(n)$ and $d = O(\log n)$. 
\end{ex}

\subsubsection{Covering spaces}

In practice, we are interested in choices of $G$ satisfying the following requirement: for each 
$u \in \unicover$, $u$ has an open neighborhood $N_u$ such that all the images $g(N_u)$ for varying $g \in G$ are disjoint and homeomorphic to $$N_v := p_G(N_u) \subseteq \unicover / G,$$ which is an open neighborhood of $v := Gu \in \unicover / G$. 
Moreover, the map $p_G$ maps every $g(N_u)$ homeomorphically to $N_v$. One immediate consequence is that the action of $G$ does not have any fixed points, i.e., there does not exist a non-trivial $g \in G$ such that $g(u) = u$ for some $u \in \unicover$.

\begin{ex}
For the $[[2d^2,2,d]]$ toric code, the neighborhood $N_u$ for any $u = (x,y) \in \unicover = \euclidsp$ can simply be chosen as an open disk $D_{1/2}(x,y)$ centered at $(x,y)$ of radius $1/2$. 
For any choice of $t_{r,s}$, the distance between $(x,y)$ and $t_{r,s}(x,y)=(x+dr,y+ds)$ is at least $d \ge 1$ so that $$D_{1/2}(x,y) \cap D_{1/2}(x+dr,y+ds) = \emptyset.$$
\end{ex}

Another way to interpret the above technical requirement
is, for each $v \in \unicover / G$, $v$ has an open neighborhood $N_v$ such that the preimage of $N_v$, $p_G^{-1}(N_v)$, are disjoint copies of $N_v$.
In topology, $\unicover$ is called a \textit{covering space} of $\unicover / G$ with the \textit{covering map} $p_G: \unicover \rightarrow \unicover / G$.
The important property of covering spaces we care about is the so-called \textit{lifting property}: 
for any point $u \in \unicover$,
a loop $\ell$ on $\unicover / G$ starting at $Gu = p_G(u)$
can be uniquely lifted as a path 
$\tilde{\ell}$ on $\unicover$ starting at $u$ such that $p_G$ maps $\tilde{\ell}$ down to $\ell$\footnote{
Formally, the paths $\tilde{\ell}$ and $\ell$ are regarded as maps from $[0,1]$ to $\unicover$ and $\unicover / G$ respectively. By mapping $\tilde{\ell}$ down to $\ell$, we mean $p_G \circ \tilde{\ell} = \ell$.}.
Since $\ell$ is a loop, the endpoint of $\tilde{\ell}$, denoted by $u'$, is also mapped to $Gu$ by $p_G$, i.e., $u' = g(u)$ for some $g \in G$.

\begin{ex}
For the $[[2d^2,2,d]]$ toric code, suppose $u = (0,0) \in \unicover = \euclidsp$. The horizontal loop $\ell$ on the torus $\unicover / G$ starting at $Gu$ is uniquely lifted as a path $\tilde{\ell}$ starting at $(0,0)$ and ending at $u' = (d,0) = t_{1,0}(0,0)$. Therefore $u' = g(u)$ where $g = t_{1,0}$.
\end{ex}

If $\unicover$ is simply-connected by default, an additional property is guaranteed: the lift $\tilde{\ell}$ of $\ell$ is a loop if and only if $\ell$ is contractible on $\unicover / G$. 
One can classify all loops $\ell$ starting at $v \in \unicover / G$ by the endpoints $g(u)$ of their lifts $\tilde{\ell}$. In fact, $G$ is known as the \textit{fundamental group} of $\unicover / G$. See Ref. \cite{Hatcher} for more details. 

Given a subgroup $H$ of $G$, one can define another orbit space $\unicover / H$ with orbits $Hu$ ($u \in \unicover$).
$\unicover / H$ is also a covering space of $\unicover / G$ with a covering map $p_G^H: \unicover / H \rightarrow \unicover / G$ defined by $p_G^H(Hu) = Gu$ for all $u \in \unicover$. 
This is to say that for any $u \in \unicover$, a loop $\ell$ starting at $Gu \in \unicover / G$ can be uniquely lifted to a path $\ell_H$ on $\unicover / H$ starting at $Hu \in \unicover / H$. Note that if $\ell$ is lifted to $\unicover$ as a path $\tilde{\ell}$ starting at $u$ and ending at $g(u)$ for some $g \in G$, the path $\ell_H$ must end at $p_H(g(u)) = H(g(u))$. This is because $\unicover$ is also a covering space of $\unicover / H$, and $\tilde{\ell}$ is a lift of $\ell_H$.
In particular, $\ell_H$ will be a loop if and only if $g \in H$. 
\begin{ex}
For the $[[2d^2,2,d]]$ toric code, suppose $H = \langle t_{1,0} \rangle \subseteq G$ is the group of horizontal translations. The space $\unicover / H$ is a vertical cylinder defined by the product of a horizontal circle of length $d$ and the real line $\mathbb{R}$. A horizontal loop on the torus $\unicover / G$ is lifted as a loop on the cylinder $\unicover / H$. A vertical loop on the torus is lifted as a vertical segment of length $d$.
\end{ex}

In fact, there is a one-to-one correspondence between subgroups $H$ of $G$ and covering spaces $\unicover / H$ of $\unicover / G$. The space $\unicover$, which is the covering space of all $\unicover / H$, is called the \textit{universal cover} of $\unicover / G$.

\subsubsection{Constructions of homomorphic gadgets}
\label{sec:construction}
\begin{figure*}
    \centering
    \includegraphics[width=\linewidth]{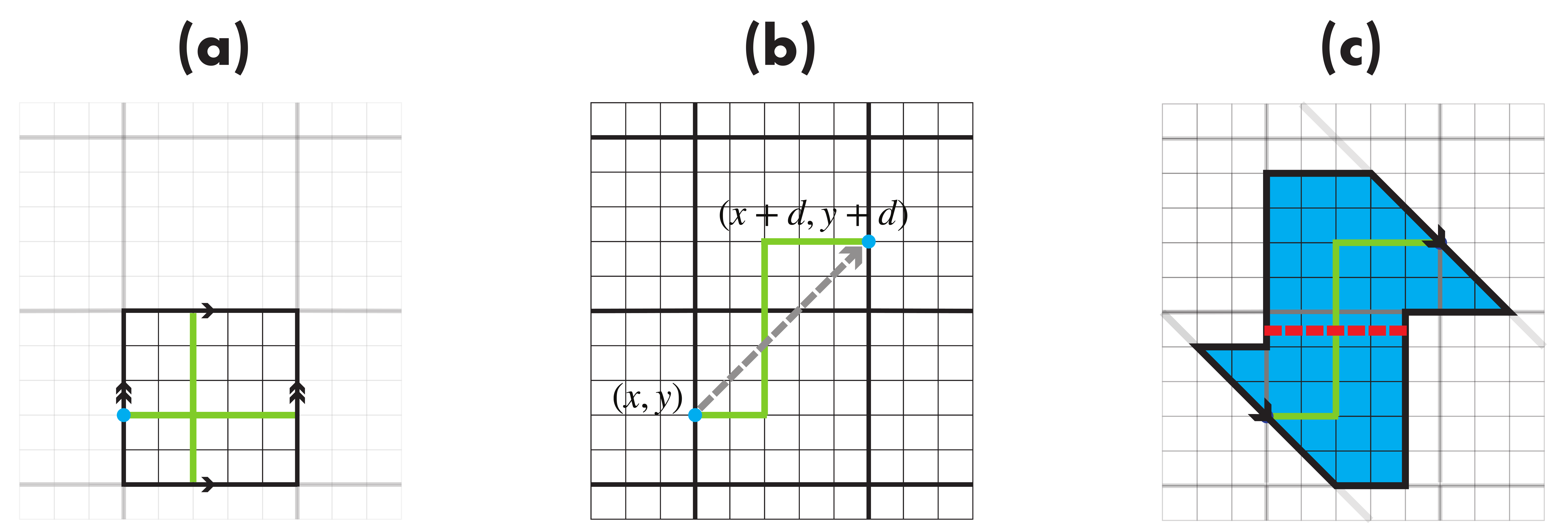}
    \caption{The procedure of constructing the homomorphic gadget for measuring $\overline{Z_1Z_2}$ of the toric code of $d=5$. (a) The green path is the loop $\ell$ corresponding to the logical $\overline{Z_1Z_2}$ operator. A basepoint of $\ell$ is chosen and marked by the blue dot.
    (b) The lift of $\ell$ from the torus $\unicover / G = T^2$ to its universal cover $\unicover = \euclidsp$  is a path connecting two points $(x,y)$ and $(x+d,y+d)$. $(x,y)$ and $(x+d,y+d)$ are related by the translation $t_{1,1} \in G$. 
    (c) The covering space $\unicover / H$ is a cylinder, where $H = \langle t_{1,1} \rangle$. The green path is the lift $\ell'$ of $\ell$ from $\unicover / G$ to $\unicover / H$, which corresponds to the unique  $Z$-type logical operator of $\unicover / H$.
    The ancilla code $\nbhood$ is a chosen to be a neighborhood of $\ell'$ colored by blue.  The logical $X$ operator is formed by edges on the cut colored in red.
    }
    \label{fig:toric_code_Z1Z2}
\end{figure*}

After the above discussion, we are ready to see that
to measure a $Z$-type logical operator of $\data = \unicover / G$ represented by a non-contractible loop $\ell$,
instead of searching for a subcomplex $\ancilla \supset \ell$ of $\data$ directly, we can lift the loop $\ell$ to a loop $\ell'$ on some covering space $p: \widetilde{\data} \rightarrow \data$ that unfolds the other non-contractible loops of $\data$ (except for self-compositions of $\ell$), then find a subcomplex $\ancilla \subset \widetilde{\data}$ 
such that $\ancilla \supset \ell'$
and $d_\ancilla = d_\data$.

As previously mentioned, $\widetilde{\data} = \unicover / H$ for some subgroup $H \subseteq G$ and $p = p_G^H$ is the covering map from $\unicover / H$ to $\unicover / G$.
To determine the subgroup $H \subset G$, suppose the lift of $\ell$ from $\unicover / G$ to the universal cover $\unicover$ is a path from $u \in \unicover$ to $g(u)$ for some $g \in G$.   
f we pick $H = \langle g \rangle$, from the previous discussion, all other loops except for $\ell$ (and its self compositions) will be unfolded when being lifted from $\data = \unicover / G$ to $\widetilde{\data} =  \unicover / H$.
The lift $\ell'$ from $\data$ to $\widetilde{\data}$ will therefore represent the unique logical $Z$ operator of the surface code $\widetilde{\data}$.
However, $\widetilde{\data}$ is infinitely large.
To have a finite-size ancilla code $\ancilla$, we can define $\ancilla \subseteq \widetilde{\data}$ to be some surface code defined on a finite neighborhood of $\ell'$ 
such that $d_\ancilla = d_\data$.
The gate matrix of the homomorphic gadget is induced by the map $\gamma := p \circ \tilde{\gamma}$, where $\tilde{\gamma}: \nbhood \hookrightarrow \widetilde{\data}$ is the inclusion map.

\begin{figure}
    \centering
    \includegraphics[width=\linewidth]{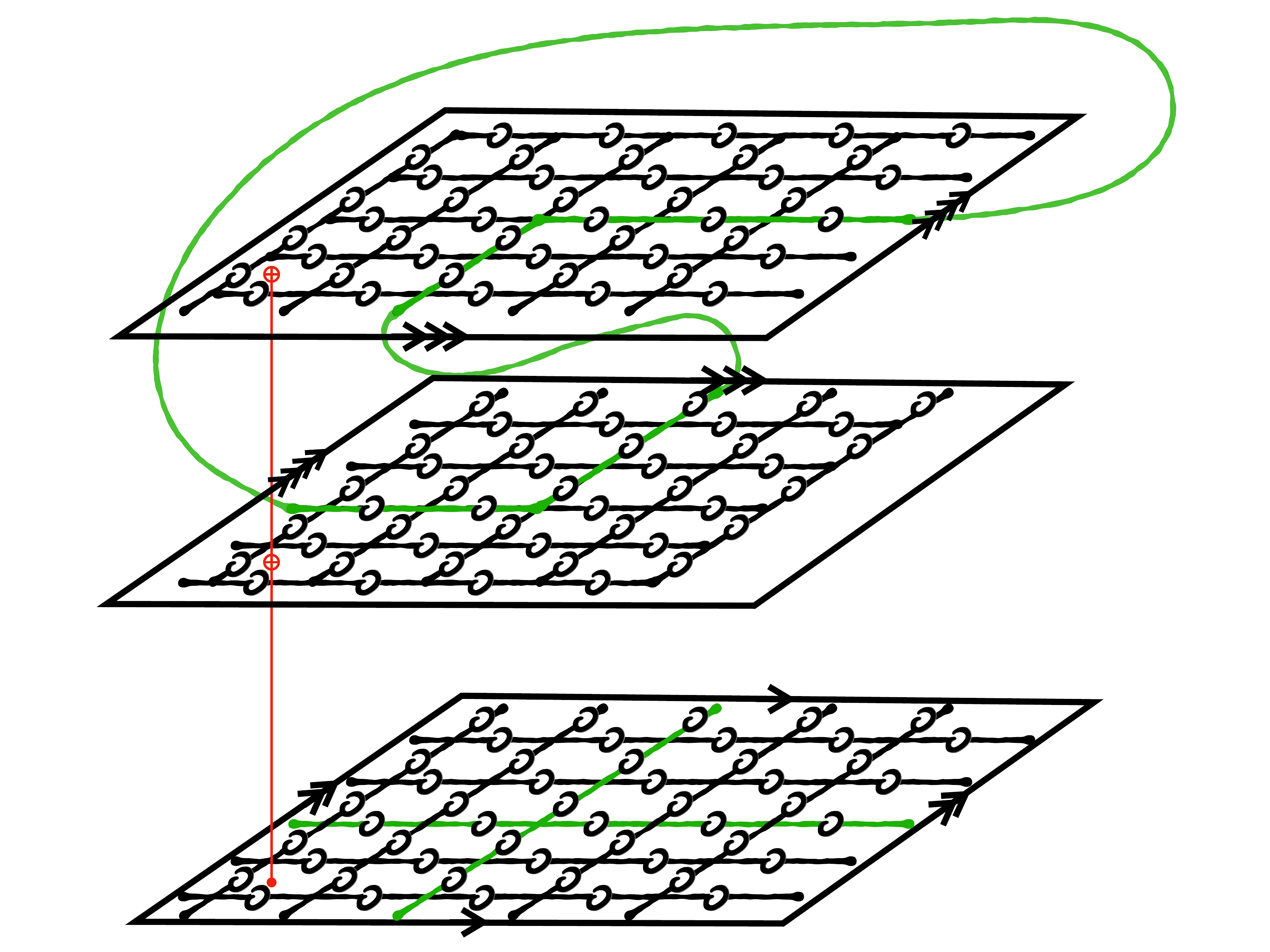}
    \caption{
    Example of the implementation of a homomorphic measurement of logical~$\overline{Z_1Z_2}$ in the toric code. The data code is defined on the torus on the bottom layer. The ancilla code (defined in \fig{toric_code_Z1Z2}(c)) is a ribbon, which when mapped onto the original torus corresponds to two layers with a single logical operator that is given by the green loop. The logical operator is measured by performing a CNOT gate between the data qubits (as control) and each corresponding ancilla qubit (as target).
    }
    \label{fig:CoveringEx}
\end{figure}

\begin{ex}
Suppose we want to measure the logical $\overline{Z_1Z_2}$ of the $[[2d^2,2,d]]$ toric code. As shown in \fig{toric_code_Z1Z2}(a)(b), the loop $\ell$ corresponding to logical $\overline{Z_1Z_2}$ is lifted as a path on $\unicover = \euclidsp$ from $(0,0)$ to $(d,d) = t_{1,1}(0,0)$. Therefore we choose $H = \langle t_{1,1}\rangle$ and $\unicover / H$ is a cylinder. As shown in \fig{toric_code_Z1Z2}(c), $\ell$ is lifted as a loop $\ell'$ along the cylinder, and one can always find a large enough neighborhood $\nbhood$ of $\ell'$ that encodes exactly one logical qubit and has distance $d$. An explicit example is given for the distance-5 toric code in Fig.~\ref{fig:CoveringEx}.

\end{ex}

If we wish to measure $k'$ independent $Z$-type logical operators in parallel, 
we first specify their representatives $\ell_1$, \ldots, $\ell_{k'}$ starting at the same base point. Each loop $\ell_i$ is lifted from $\unicover / G$ to the universal cover $\unicover$ as a path from $u \in \unicover$ to $g_i(u)$ for some $g \in G$. We then choose $H = \langle g_1, \ldots, g_{k'}\rangle$ and $\widetilde{\data} = \unicover / H$.
Let $\ell_i'$ be the lift of $\ell_i$ from $\data = \unicover / G$ to $\widetilde{\data}$. We can define the ancilla code $\ancilla$ to be a subcomplex
of $\widetilde{\data}$ such that $\ell_i' \subset \ancilla$ for all $i$ and $d_\ancilla = d_\data$. The ancilla code $\ancilla$ encodes $k'$ logical qubits and is prepared in $\overline{\ket{0^{\otimes k'}}}$.

\subsubsection{Fault tolerance}

Unlike the homomorphic gadgets defined from subcomplexes of $\data = (\vertexset, \edgeset, \faceset)$, the homomorphic gadgets defined from covering spaces fail to have a transversal data-ancilla interaction, as an edge (data qubit) $e \in \edgeset$  could be covered by multiple edges (ancilla qubits) of $\ancilla = (\vertexset', \edgeset', \faceset')$. A potential issue is that correlated $X$-errors on the ancilla code block $\ancilla$ could be induced by $X$-errors on the data block $\data$.
As an example, an $X$-error occurs on the qubit $e$ before the interaction will propagate to all edges in the subset $\gamma_1^\transpose(e) \subset \edgeset'$.
In \app{fault_tolerance}, we prove that for any gate schedule of the data-ancilla interaction, 
the effective $X$-distance of $\ancilla$ is guaranteed to be $\min\{d_\data, d_\ancilla\}$.
As a result, repetitive measurements are not required for boosting the logical measurement accuracy.

\subsection{Joint measurement}\label{sec:joint_meas}

Now we discuss the case that the measured $Z$-type logical operator $\overline{P}$ corresponds to two disjoint loops $\ell_1$ and $\ell_2$. If the manifold $\manifold$ is path-connected,  
one can always find a path $p$ on $(\vertexset, \edgeset)$ that connects a vertex $v_1$ on $\ell_1$ to another vertex $v_2$ on $\ell_2$, 
then construct a loop $\ell = \ell_1 \cdot p \cdot \ell_2 \cdot \bar{p}$ 
that starts from $v_1$, traverses the paths $\ell_1$, $p$, $\ell_2$ and $\bar{p}$ (the reverse path of $p$), and finally ends at $v_1$. It is clear that $\ell$ also corresponds to the logical opeartor $\overline{P}$ and we can apply the construction discussed in \sec{homgadget_coveringspace}.

\begin{figure}
    \centering
    \includegraphics[width=0.85\linewidth]{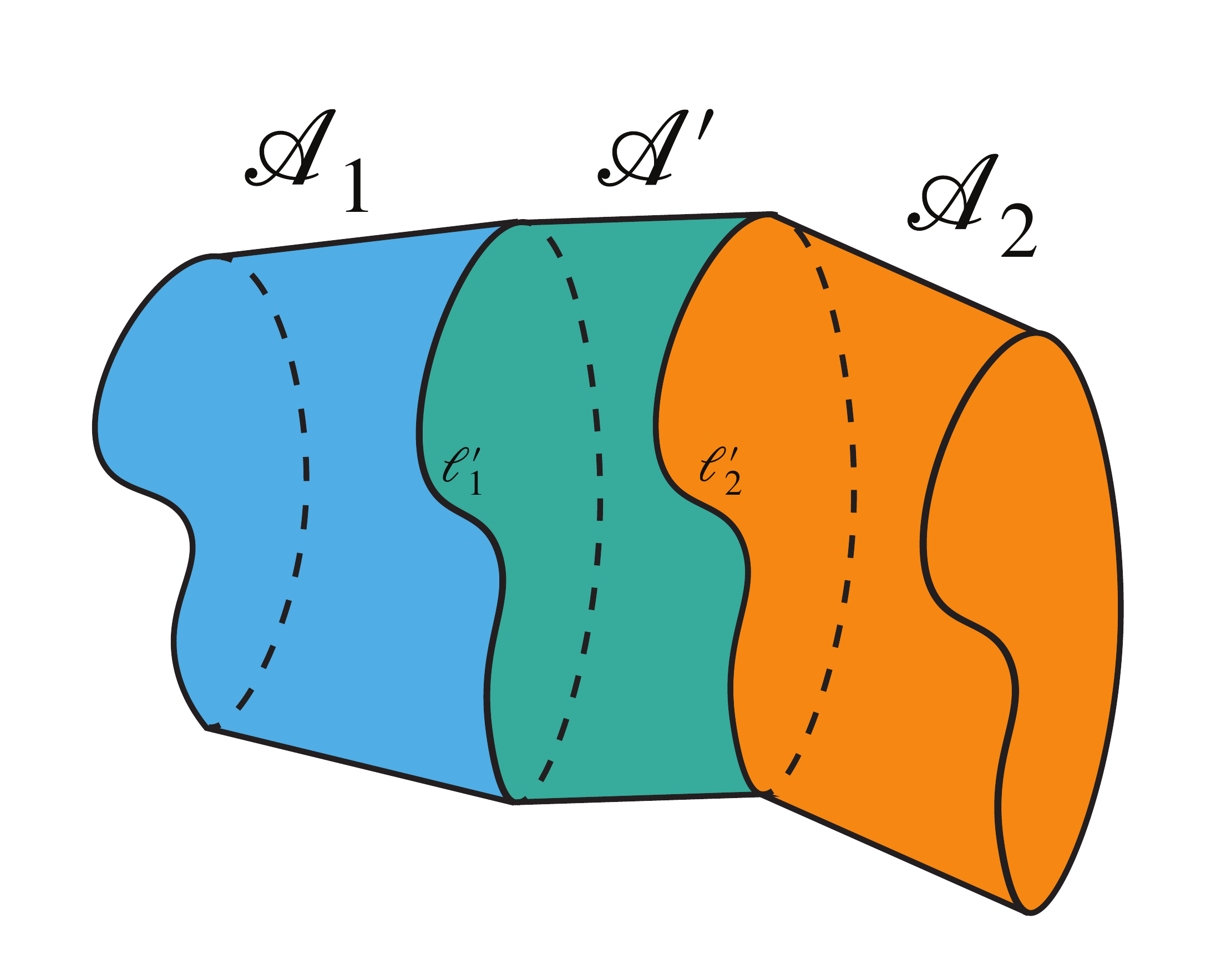}
    \caption{
    Ancilla preparation for jointly measuring the product of two $Z$-type logical operators $\overline{P}_1$ and $\overline{P}_2$.
    $\ancilla_1$ (in blue) and $\ancilla_2$ (in orange) are homomorphic ancilla blocks for measuring $\overline{P}_1$ and $\overline{P}_2$ respectively, both of which are surface codes encoding one logical qubit. Each $\ancilla_i$ has some smooth boundary $\ell_i'$ representing its logical $Z$ operator.
    To measure $\overline{P_1P_2}$ without measuring each individual $\overline{P}_i$, the joint code $\ancilla_1 \cup \ancilla_2$ should be prepared in logical Bell state. One can first prepare each $\ancilla_i$ in logical $\overline{\ket{+}}$, then use some surface code $\ancilla'$ (in green) to merge $\ell_1'$ and $\ell_2'$ and split afterwards, which will measure $\overline{Z_1Z_2}$ of $\ancilla_1 \cup \ancilla_2$. If the measurement outcome is $-1$, one fixes the outcome by applying $\overline{X}_1$ on $\ancilla_1$.
    }
    \label{fig:joint}
\end{figure}

If $\manifold$ is not path-connected (which happens, for instance, in the case of two separate toric codes), the above trick does not work. Suppose $\nbhood_1$ and $\nbhood_2$ are the two ancilla surface codes for measuring the $Z$-type logical operators corresponding to $\ell_1$ and $\ell_2$, respectively. Our goal is to jointly prepare the ancillas $\nbhood_1$ and $\nbhood_2$ in the logical Bell state stabilized by the logical $\overline{Z_{\nbhood_1}Z_{\nbhood_2}}$ operator.
Each $\nbhood_i$ ($i = 1,2$) is not closed and has smooth boundaries. Thus the logical $Z$ operator of $\nbhood_i$ can be represented by some loop $\ell_i' \subseteq \partial \nbhood_i$ along the boundary of $\nbhood_i$. To prepare
the ancilla logical Bell state, one can first prepare $\overline{\ket{+}}_{\nbhood_1}$ and $\overline{\ket{+}}_{\nbhood_2}$ seperately, then apply lattice surgery~\cite{Horsman_2012} to measure $\overline{Z_{\nbhood_1}Z_{\nbhood_2}}$, by joining the two boundary loops $\ell_1'$ and $\ell_2'$ by some surface code $\nbhood'$ satisfying $\partial \nbhood' = \ell_1' \cup \ell_2'$, which is illustrated in \fig{joint}. 

\subsection{Comparison with lattice surgery approaches}

Previously, logical Pauli measurements of hyperbolic surface codes are accomplished by lattice surgery approaches~\cite{Breuckmann2017hyperbolic,cohen2022lowoverhead}.
For example, suppose we want to measure a $Z$-type logical operator $\overline{P}$. We can prepare an ancilla planar surface code in $\overline{\ket{0}}$ state, then jointly measure $\overline{P} \overline{Z_{{\rm anc}}}$ by merging the data block and the ancilla block then splitting, where $\overline{Z_{{\rm anc}}}$ is the logical $Z$ operator of the ancilla code. If $\overline{P}$ has weight $w$, the size of the ancilla surface code will be $O(dw)$. The lattice surgery procedure takes $d$ repetitive syndrome measurement rounds to guarantee fault tolerance.
In addition, when the data and ancilla blocks are merged, 
each qubit supporting $\overline{P}$ will be adjacent to $3$ faces instead of $2$. An $X$ error on these qubits will create $3$ excitations and thus the decoding graph becomes a hypergraph. As a result, convential surface code decoding algorithms such as MWPM~\cite{Dennis:2002,edmonds2009paths} can not be applied immediately.

As a comparison, the ancilla states for homomorphic measurements can be prepared \textit{offline} by $d$ rounds of Shor-style syndrome measurements. More importantly, the structure of the data code is unchanged so that MWPM can be still applied.
However, the size of the ancilla block could be a disadvantage:
In \app{upperbound}, we argue that the ancilla block has size $O(nw/d)$.
For hyperbolic surface codes, 
as $d = O(\log n)$ and $w = O(d)$, 
the ancilla size will be $O(n)$, 
while the ancilla of the lattice surgery approach will have size $O(\log^2 n)$.
If ancilla size is a concern, one can choose an ancilla of $X$-distance $d/c$ for some number $c$ independent of $d$, and repeat the logical measurement circuit for $O(c)$ rounds.

\subsection{Surface codes with rough boundaries}\label{sec:roughboundary}

\begin{figure}
    \centering
    \includegraphics[width=0.82\linewidth]{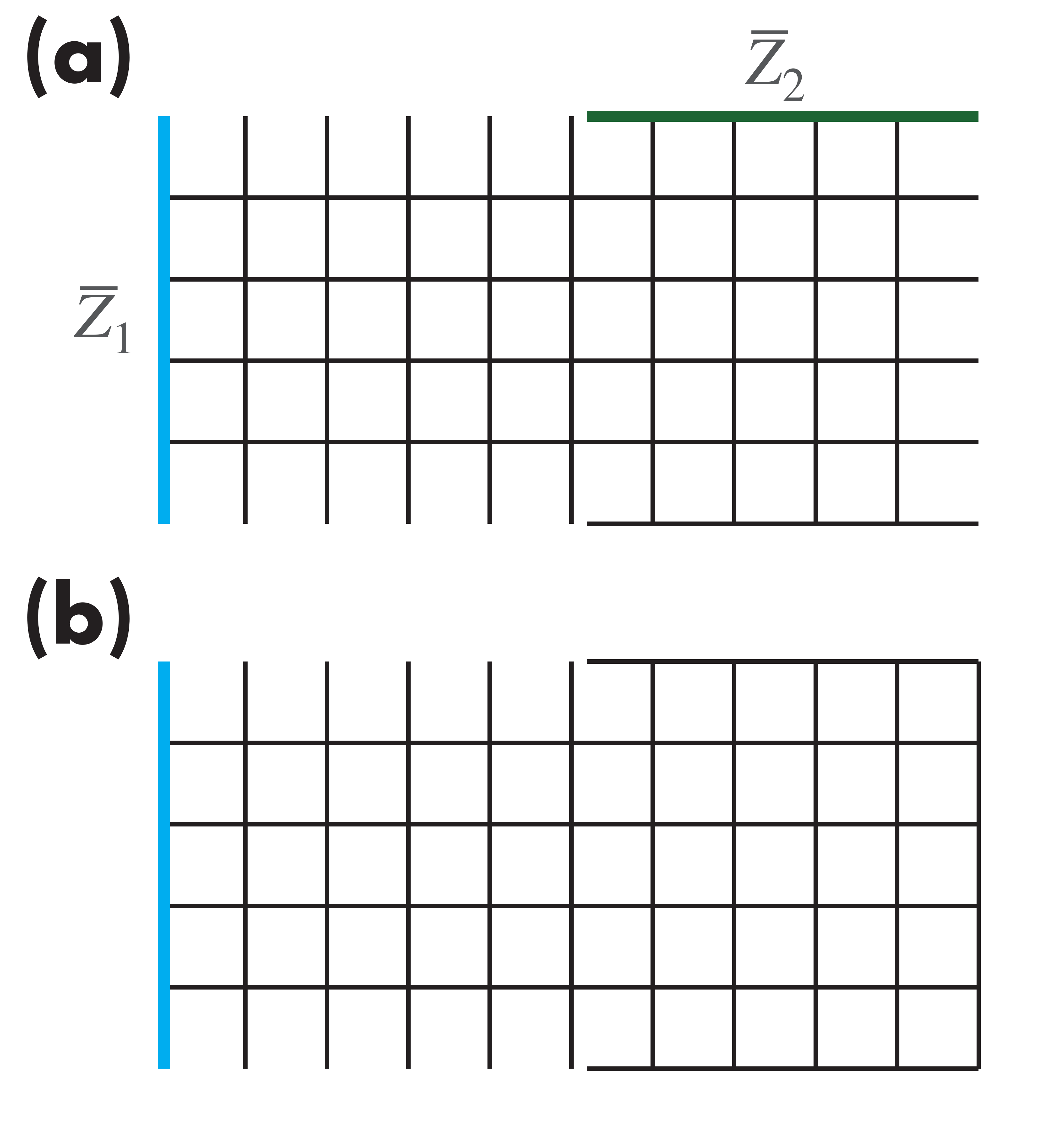}
    \caption{(a) A planar surface code encoding two logical qubits. The logical $\overline{Z}_1$ and $\overline{Z}_2$ operators are represented by blue and green strings. (b) Ancilla patch for measuring $\overline{Z}_1$. The rough boundary on the right is closed by adding qubits and $X$-checks. The ancilla patch encodes $1$ logical qubit and is prepared in $\overline{\ket{0}}$. A CNOT gate is applied between every data qubit (as control) and its corresponding ancilla qubit (as target).
    }
    \label{fig:boundary}
\end{figure}

For surface codes with rough boundaries, instead of being a closed loop, a $Z$-type logical operator can be an open path connecting two different rough boundaries. 
If the surface is planar and without holes, as shown in \fig{boundary}, 
the ancilla code is almost identical to the data code, while the rough boundaries that do not support the measured $Z$-type logical operator are all being closed by adding extra edges and vertices and therefore become smooth boundaries.
If non-trivial loop operators exist on the surface, we can in principle combine the trick above and the procedure discussed in \sec{homgadget_coveringspace} to unfold the unwanted logical loops. As pointed out in Ref.~\cite{breuckmann2016constructions}, it is unlikely that surface code with boundaries will have good parameters $k$ and $d$. Therefore we do not provide a rigorous treatment for surface codes with both boundaries and holes.

\section{Conclusions and outlook}\label{sec:conclusion}

In this paper, we propose a framework called \textit{homomorphic measurement} to generalize the traditional Shor and Steane measurements and circumvent their individual difficulties on quantum LDPC codes. For surface codes such as the toric code~\cite{kitaev2003fault} and hyperbolic surface codes~\cite{breuckmann2016constructions}, we provide a procedure to construct homomorphic gadgets for single-shot non-destructive logical CSS  measurements. The ancilla code blocks can be prepared off-line without any state distillation. In particular, our scheme preserves the 2D structure of the surface code being measured and therefore does not complicate the decoding process.
Our scheme can serve as a complement of the existing logical operation approaches, including lattice surgery~\cite{Breuckmann2017hyperbolic,cohen2022lowoverhead} and Dehn twists~\cite{Breuckmann2017hyperbolic,zhu2020universal,lavasani2019universal}.

We note that similar to gate teleportations~\cite{gottesman1999demonstrating}, one can initialize the ancilla code in some magic states to implement non-Clifford gates on particular logical qubits of the data. While magic state distillation protocols~\cite{bravyi2005universal,bravyi2012magic} are still required for preparing these states,
as the ancilla code only encodes one logical qubit, the full distillation procedure will be simplified.

Our construction of homomorphic gadgets for surface codes are inspired from geometrical intuitions. However,
many recent quantum LDPC codes of good performance are constructed from products of classical codes~\cite{tillich2014quantum,hastings2021fiber,panteleev2002asymptotically,breuckmann2021balanced,panteleev2021quantum,leverrier2022quantum}, which are purely algebraic. 
The standard toric code can also be obtained by product of two classical repetition codes~\cite{tillich2014quantum}. However, as we have seen in \sec{homgadget_coveringspace}, 
the homomorphic gadget for measuring $\overline{Z}_1\overline{Z}_2$ does not have a product structure. Thus, it is unlikely that for codes with product constructions, useful homomorphic gadgets have product structure as well. 
We hope that these gadgets can still be constructed and inherit the decoding properties of the original code such as single-shot error correction~\cite{Campbell2019, Fawzi2018constant}.

Finally, it is known that Steane measurement can be generalized for arbitrary stabilizer codes \cite{yoder2018practical,Zheng_2018,Zheng_2020}, while Shor and Knill \cite{knill2005scalable} measurements work for non-CSS situations directly.
Generalizations of our framework to non-CSS Pauli measurements is a natural direction, which is left for future work.

\section*{Acknowledgements}

The authors would like to thank Benjamin Brown, Kenneth Brown, Andrew Cross and Guanyu Zhu for valuable discussions and feedback. 
S.H. is supported by the NSF EPiQC Expeditions in Computing (1832377), the Office of the Director of National Intelligence - Intelligence Advanced Research Projects Activity through an Army Research Office contract (W911NF-16-1-0082), and the Army Research Office (W911NF-21-1-0005).
S.H. also thanks the IBM graduate internship program for support. T.~J. and T.~Y. are supported by the U.S. Department of Energy, Office of Science, National Quantum Information Science Research Centers, Co-design Center for Quantum Advantage (C2QA) contract~(DE-SC0012704).
\bibliography{refs}

\appendix

\begin{figure*}\label{fig:upperbound}
\includegraphics[width=\textwidth]{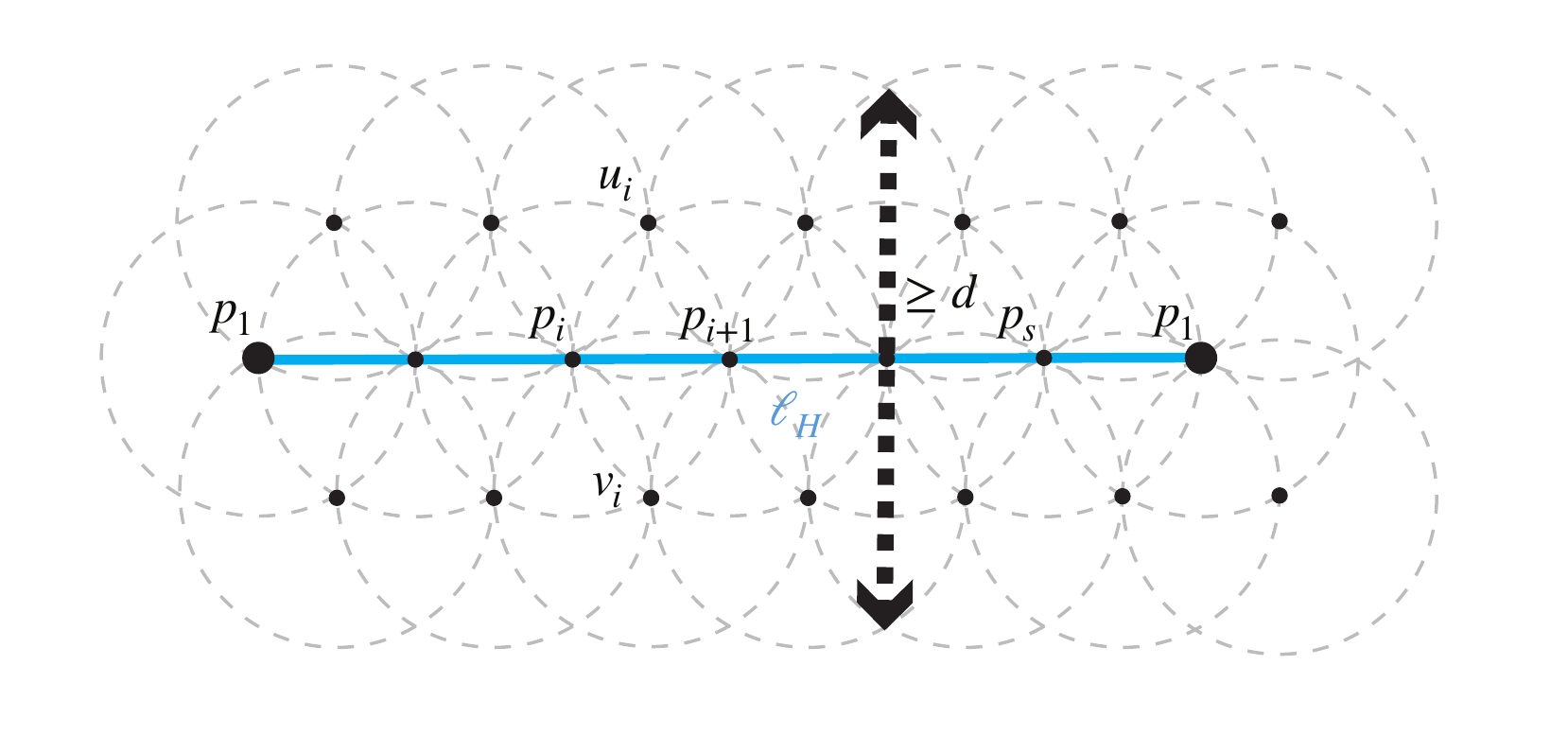}
\caption{A construction of neighborhood of the loop $\ell_H$ that has size $O(nw/d)$. Each disk has radius $d/2$.}
\end{figure*}

\section{Fault tolerance of homomorphic gadgets}\label{app:fault_tolerance}

Suppose a homomorphic gadget for a surface code $\data = (\vertexset, \edgeset, \faceset)$ of distance $d_\data$ is constructed following the procedure described in \sec{construction}, with an ancilla surface code $\ancilla = (\vertexset', \edgeset', \faceset')$ of distance $d_\ancilla$ and a gate matrix $\gamma_1$.
Let $E_\ancilla$ be any representative of the logical $X$ operator of $\ancilla$.
$E_\ancilla$ can be considered as a subset of $\edgeset'$.
Define $E_\data \subseteq \edgeset$ be the subset of edges such that $e \in E_\data$ if and only if $\gamma_1^\transpose(e) \cap E_\ancilla \ne \emptyset$. 
In the worst case, $X$-errors on each $e \in E_\data$ spread to all ancilla qubits in the set $E_\ancilla$, and the logical error $E_\ancilla$ can be effected with $|E_\data|$ errors instead of $|E_\ancilla|$. 

To guarantee fault tolerance, we show that $|E_\data|
\ge \min\{d_\data, d_\ancilla\}$.
Note that it suffices to consider the case that $E_\ancilla \subseteq \edgeset'$ is \textit{connected}, which is defined in the following sense:
\begin{dfn}[Connected edge subset]
Suppose we have a subset of edges $E \subseteq \edgeset$ on a cell complex $(\vertexset, \edgeset, \faceset)$. We say that $E$ is \textbf{connected}, if for any two edges $e,e'\in E$, one can find a sequence of edges $e_1, \cdots, e_k \in E$ such that
(i) $e_1 = e$ and $e_k = e'$, (ii) for $i = 1, \ldots, k-1$, $e_i$ shares a face $f \in \faceset$ with $e_{i+1}$.
\end{dfn}

For any non-connected $E_\ancilla$, one can always replace $E_\ancilla$ by
one of its connected component $E_\nbhood'$ which also represents the logical $X$ operator. The corresponding edge set $E_\data' \subseteq E_\data$ has fewer edges.

\begin{prop}
If $E_\nbhood$ is connected, $|E_\data| \ge \min\{d_\data, d_\ancilla\}$.
\end{prop}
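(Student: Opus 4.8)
The plan is to argue by contradiction. First observe that since the gate matrix is $\gamma_1$ with $\gamma = p\circ\tilde\gamma$ and $\tilde\gamma: \ancilla\hookrightarrow\widetilde{\data}$ the inclusion, $\gamma_1^\transpose(e)$ is exactly the fiber $p^{-1}(e)\cap\edgeset'$, so $E_\data$ equals the set-image $p(E_\ancilla)\subseteq\edgeset$. Now suppose the connected representative $E_\ancilla$ of the logical $X$ operator of $\ancilla$ has $|E_\data|<\min\{d_\data,d_\ancilla\}$. Since $E_\ancilla$ is a non-trivial cocycle of $\ancilla$ we have $|E_\ancilla|\ge d_X^\ancilla\ge d_\ancilla>|E_\data|=|p(E_\ancilla)|$, so $p$ cannot be injective on $E_\ancilla$; fix distinct $e_1',e_2'\in E_\ancilla$ with $p(e_1')=p(e_2')$. (Had $p$ been injective on $E_\ancilla$ we would already be done, since then $|E_\data|=|E_\ancilla|\ge d_\ancilla$.)

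Next I would encode the connected cocycle $E_\ancilla$ as a single closed walk $w'$ in the dual of $\ancilla$: the ``connected'' hypothesis provides one Eulerian circuit through the faces of $\ancilla$, with the smooth boundary components treated as auxiliary dual vertices, so that $w'$ has length $|E_\ancilla|$ and visits exactly $E_\ancilla$. Because $p$ is a local homeomorphism it sends dual cells to dual cells, so $w:=p\circ w'$ is a closed walk in $\data$ of the same length visiting exactly $E_\data$. Pushing down the portion of $w'$ running between $e_1'$ and $e_2'$ (which are joined inside $E_\ancilla$ by connectedness) gives a closed loop $\sigma$ in $\data$ supported on $E_\data$; its lift to $\widetilde{\data}$ is the corresponding sub-path of $w'$, which has distinct endpoints, so $\sigma$ is homotopically non-trivial in $\data$ and represents an element of $\pi_1(\data)=G$ outside the subgroup $H$ with $\widetilde{\data}=\unicover/H$.

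The crux, and the step I expect to be the main obstacle, is to upgrade $\sigma$ to a genuine logical obstruction: I would show that the cocycle (dual $1$-cycle) determined by $\sigma$ is non-trivial in $H^1(\data)$, so its weight, which is at most $|E_\data|$, is at least $d_X^\data\ge d_\data$, contradicting $|E_\data|<d_\data$. Here the construction is essential: $\widetilde{\data}=\unicover/H$ was chosen to unfold every non-contractible loop of $\data$ except powers of $\ell$, and $\ancilla$ is a neighbourhood of $\ell'$ with $d_\ancilla=d_\data$, which together constrain how $e_1',e_2'$ sit relative to $\ell'$. The danger is that for hyperbolic codes $G$ is non-abelian, so a $\pi_1$-non-trivial $\sigma$ could be null-homologous with its dual cocycle a coboundary; to rule this out I would either exploit that $w'$ must meet $\ell'$ (because $E_\ancilla$ anticommutes with the $Z$-logical of $\ancilla$), forcing a non-trivial intersection pairing downstairs, or else take $E_\ancilla$ minimizing $|E_\data|$ and use a null-homologous $\sigma$ to modify $E_\ancilla$ by a coboundary into a connected representative with strictly smaller projection, contradicting minimality. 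Setting up the dual complex cleanly around the smooth boundaries of $\ancilla$, and making the coboundary modification preserve connectedness, are the remaining fiddly points.
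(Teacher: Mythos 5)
Your proposal follows essentially the same route as the paper's proof: split on whether the projection is injective on $E_\ancilla$ (if so, $|E_\data|=|E_\ancilla|\ge d_\ancilla$); otherwise use connectedness to produce a dual walk inside $E_\ancilla$ joining two distinct preimages $e_1',e_2'$ of a single data edge, project it to a closed dual loop in $\data$ supported on $E_\data$, and argue that this loop is a non-trivial $X$-type logical operator of $\data$, forcing $|E_\data|\ge d_\data$. (A small simplification: the Eulerian-circuit packaging is unnecessary and is not actually guaranteed by connectedness; all you need is a single dual walk from $e_1'$ to $e_2'$ within $E_\ancilla$, which the connectedness definition directly provides.)

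The genuine gap is exactly the one you flag yourself and then leave open. The lifting argument only shows that $\sigma$ represents an element of $\pi_1(\data)$ outside (a conjugate of) $H$, i.e.\ $\sigma$ is non-contractible; but the distance bound requires the corresponding cut to be non-trivial in $H^1(\data;\ztwo)$, and non-triviality in $\pi_1$ does not imply non-triviality in mod-2 (co)homology (squares of generators and commutators are the standard counterexamples, and $G$ is non-abelian in the hyperbolic case). You name two candidate repairs — an intersection-pairing argument using the fact that $E_\ancilla$ anticommutes with the $Z$-logical along $\ell'$, or a minimality/coboundary-modification argument — but carry out neither, so the proposal is a plan rather than a proof at its decisive step. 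For comparison, the paper's own proof disposes of this step in one sentence, asserting that the projected edge sequence ``will be mapped down as a cut (non-trivial $X$-type logical operator) of $\data$'' without addressing the $\pi_1$-versus-$H_1$ distinction; so you have correctly located the load-bearing claim, but to count as a complete argument you would need to close it, most plausibly via your first suggestion (the projected loop inherits a non-trivial $\ztwo$ intersection pairing with a $Z$-logical, which certifies a non-trivial class in $H^1(\data;\ztwo)$).
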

\begin{proof}
If $|\gamma_1^\transpose(e) \cap E_\ancilla| = 1$ for all $e \in E_\data$, as $|E_\ancilla| \ge d_\ancilla$, we immediately have $|E_\data| \ge d_\ancilla$.

Suppose there exist one $e \in E_\data$ such that $|\gamma_1^\transpose(e) \cap E_\nbhood| > 1$. 
Any edge sequence connecting two different edges in the subset $\gamma_1^\transpose(e) \cap E_\nbhood$ must have at least $d_\data+1$ edges. 
Otherwise, the edge sequence will be mapped down as a cut (non-trivial $X$-type logical operator) of $\data$ with weight less than $d_\data$, which contradicts to the definition of code distance.
Moreover, the edges in the edge sequence are mapped to at least $d_\data$ different edges in $\edgeset$ by $\gamma_1$. Therefore one must have $|E_\data| \ge d_\data$.
\end{proof}

\section{Upper bound of the ancilla size}\label{app:upperbound}

In this section, we argue that the ancilla block constructed in \sec{homgadget_coveringspace} has size $O(nw/d)$, where $w$ is the weight of the measured logical operator.

Let $\ell_H$ be the lift of the logical loop $\ell$, which is included by the ancilla block $\ancilla$. We want to construct a neighborhood of $\ell_H$ that supports $O(nw/d)$ qubits and has $X$-distance at least $d$.
Pick a set of points $p_1$, \ldots, $p_s$ on $\ell_H$, where $p_i$ and $p_{i+1}$ (also $p_s$ and $p_1$) are separated by a distance less than $d/2$. We can have $s = O(w/d)$. Let $D_{d/2}(p_i)$ be the disk centered at $p_i$ of radius $d/2$. Each disk $D_{d/2}(p_i)$ covers no more than $n$ qubits as there are no two qubits inside $D_{d/2}(p_i)$ separated by a distance $d$. The union $\bigcup_i D_{d/2}(p_i)$ covers $O(nw/d)$ qubits. However, if we simply pick the largest surface code inside $\bigcup_i D_{d/2}(p_i)$, its distance is not reaching $d$: the distance between the intersection points $u_i$ and $v_i$ between two circles $\partial D_{d/2}(p_i)$ and $\partial D_{d/2}(p_{i+1})$ is less than $d$. As shown in \fig{upperbound}, the code distance can achieve $d$ if the neighborhood is chosen to be
\begin{eqnarray}
\bigcup_{i=1}^s \left(D_{d/2}(p_i) \cup D_{d/2}(u_i) \cup D_{d/2}(v_i) \right).
\end{eqnarray}
\end{document}